\newtheorem{corollary}{Corollary}{}
\newtheorem{remark}{Remark}{}
\newtheorem{assumption}{Assumption}{}
\newtheorem{lemma}{Lemma}{}
\newtheorem{proposition}{Proposition}{}
\newtheorem{theorem}{Theorem}{}
\begin{document}

\begin{frontmatter}

\title{Asynchronous Distributed Averaging: A Switched System Framework for Average Error Analysis\thanksref{footnoteinfo}} 

\thanks[footnoteinfo]{This paper was not presented at any IFAC 
meeting. Corresponding author K.~Lee. Tel. +1-575-835-5554. 
Fax +1-575-835-5209.}

\vspace{-0.2in}
\author[Lee]{Kooktae Lee}\ead{kooktae.lee@nmt.edu}    

\address[Lee]{Department of Mechanical Engineering, New Mexico Institute of Mining and Technology, Socorro, NM 87801, USA}  

\begin{keyword}                           
Distributed averaging; Asynchronous communication; Multi-agent average consensus; Error analysis.               
\end{keyword}                             

\begin{abstract}                          
This paper investigates an expected average error for distributed averaging problems under asynchronous updates. The asynchronism in this context implies no existence of a global clock as well as random characteristics in communication uncertainty such as communication delays and packet drops. Although some previous works contributed to the design of average consensus protocols to guarantee the convergence to an exact average, these methods may increase computational burdens due to extra works. Sometimes it is thus beneficial to make each agent exchange information asynchronously without modifying the algorithm, which causes randomness in the average value as a trade-off. In this study, an expected average error is analyzed based on the switched system framework, to estimate an upper bound of the asynchronous average compared to the exact one in the expectation sense. Numerical examples are provided to validate the proposed results.
\end{abstract}
\end{frontmatter}

\section{Introduction}
Consider an undirected graph $\mathcal{G}=\{\mathcal{V},\mathcal{E}\}$, where $\mathcal{V}=\{\textit{v}_1,\textit{v}_2,\hdots,\textit{v}_n\}$ is a set of $n$ numbers of nodes, $\mathcal{E}\subseteq \mathcal{V}\times \mathcal{V}$ is a set of edges. The symbol $\mathcal{N}_i=\{j\vert\{i,j\}\in\mathcal{E}\}$ represents the set of neighbors of node $i$.
Given an initial value $x_i(0)$ for each node $i$, a distributed averaging problem is to seek an average of $x(0)=[x_1(0), x_2(0),\ldots,x_n(0)]^{T}$, i.e. $\overline{x}:=\dfrac{1}{n}\underline{1}^{T}x(0)$, through information exchanges with connected nodes.

To reach the exact average in distributed averaging, it is known \cite{lee2019effect} that the synchronization is necessary. This is quite restrictive in reality because of some implementation issues like the existence of a global clock as well as communication uncertainty (e.g., communication delays and packet drops). Although an asynchronous algorithm, a counterpart of the synchronous one, naturally takes into account the above issues and hence more practical, it has been reported in \cite{lee2019effect}, \cite{xiao2008asynchronous}, \cite{fang2005information}, \cite{acciani2018achieving} that asynchronous information exchanges may lead to inaccuracy in solutions. Therefore, robust communication protocols have been studied (see \cite{mehyar2007asynchronous}, \cite{chen2010corrective}, \cite{hadjicostis2016robust}, \cite{du2020accurate} to list a few) to correct biased averages.
These protocols serve as a compensator to induce an exact average while allowing asynchronous communications between connected nodes. These approaches, however, may increase computational burdens due to auxiliary algorithms in each computing node, possibly delaying the time for averaging. Sometimes it is thus more advantageous to enable each agent to exchange information asynchronously without modifying the algorithm, which necessitates rigorous analysis on the accuracy of resultant asynchronous averages.

In this work, expected average error analysis is performed to estimate an upper bound of the asynchronous average compared to the exact one in the expectation sense. The switched system framework \cite{lee2015stability} is adopted as a tool to analyze the expected average error. The proposed method does not require a node to know the entire information about the interaction topology. Rather, knowing diagonal elements in the interaction topology matrix is enough for the error analysis. Moreover, only the largest absolute initial value, but not all of them, needs to be exposed for the upper bound calculation and thus, it preserves the privacy of each node. To validate the proposed results, numerical examples and results are provided.


\section{Preliminaries}

\noindent\textbf{Notation:} 
The set of real and natural numbers are denoted by $\mathbb{R}$ and $\mathbb{N}$. Moreover, $\mathbb{N}_0 := \mathbb{N}\cup\{0\}$. The superscript $^{T}$ is given to represent a transpose operator. The symbols $\text{I}^{n\times n}$ and $\text{0}^{n\times n}$ stand for an ${n\times n}$ identity and zero matrix, respectively, whereas the dimension is omitted in an apparent case. Further, $\underline{1}$ and $\underline{0}$, respectively, represent a column vector with all components being one and zero. The Euclidean and infinity norms are expressed as $\lVert \cdot \rVert$ and $\lVert \cdot \rVert_{\infty}$, respectively. The symbol $\otimes$ denotes the Kronecker product and $\mathbb{E}[\cdot]$ stands for the expectation with a given probability. The operator $\text{diag}(\cdot)$ returns a column vector composed of diagonal elements for a given square matrix. For any $X\in\mathbb{R}^{m\times n}$, $X^{\otimes r}$ is defined by $X^{\otimes r}:=\underbrace{X\otimes X \otimes \cdots \otimes X}_{r \text{ times}}$.

Under asynchronous communications, the state update model in each node for distributed averaging can be written as follows.
\begin{align}
x_{i}(k_i+1) = a_{ii}x_{i}(k_i) + \sum_{j\in\mathcal{N}_i}a_{ij}x_{j}(k_j^*),\label{eqn:each state async}
\end{align}
where $A := [a_{ij}]\in\mathbb{R}^{n\times n}$ is a \textit{doubly stochastic} matrix and $k_i\in\mathbb{N}_0$ is a discrete-time index that can be different from each agent, meaning a global clock is unnecessary. Given a definition of a set $\mathbb{K}_q:=\{k,k-1,\ldots,k-q+1\}$, where $q\in\mathbb{N}$, the random variable $k_j^*\in\mathbb{K}_q$ represents asynchronism.

In the asynchronous setup, the node $i$ does not wait for the values being received from the neighboring nodes. Rather, it executes the update law \eqref{eqn:each state async} with the most recent value $x_j(k_j^{*})$ saved in the buffer memory.
If new values are received from neighboring nodes, each $x_j(k_j^*)$ is updated accordingly.
The following assumptions are proposed for the asynchronous update model \eqref{eqn:each state async}.

\begin{assumption}\label{assump:eventual_comm.}
For any node $i$ and any given $k_j^{*}\in\mathbb{K}_q$, where $j\in\mathcal{N}_i$, there exists time $k_i\geq k_j^{*}$ such that $x_j$ is updated by communications with the node $j$ at time $k_i$. 
\end{assumption}
\begin{assumption}\label{assump:iter_no_more_than_once}
If $\Delta t_i$ is denoted by actual elapsed time to update $x_i$ using \eqref{eqn:each state async}, then for any node $i$ this update cannot occur more than once over the time period $\max_i \Delta t_i$.

\end{assumption}
Assumption \ref{assump:eventual_comm.} is identical to the eventual update assumption in \cite{mehyar2007asynchronous} and Assumption \ref{assump:iter_no_more_than_once} assures that the state update across all nodes cannot happen more than once for a given time period, $\max_i \Delta t_i$.
The average error analysis will be carried out under these assumptions.


\section{Average Error Analysis}
Asynchronous updates in multi-agent consensus problems may lead to inaccuracy in consensus values \cite{lee2019effect}. This implies that the asynchronous distributed average consensus may end up with an inexact average. As proposed in \cite{mehyar2007asynchronous}, \cite{chen2010corrective}, \cite{hadjicostis2016robust}, \cite{du2020accurate}, this issue can be obviated by augmenting auxiliary algorithms that require extra computing, possibly delaying the time to obtain the average. Sometimes it is beneficial in terms of computational speed and easiness of implementation to proceed with distributed averaging asynchronously, which necessitates rigorous analysis on how far the asynchronous average is deviated from the exact average. 

For this purpose, the switched system widely adopted dynamical system framework to describe uncertainty in a communication network is introduced as follows.
\subsection{Switched System}
Given a definition ${y}(k) := [x(k)^{T}, x(k-1)^{T}, \hdots, x(k-q+1)^{T}]^{T}$ with $k\in\mathbb{N}_0$ being a discrete-time index and $x(k):= [x_1(k), x_2(k),\hdots, x_n(k)]^{T}$, the asynchronous dynamics \eqref{eqn:each state async} evolves under Assumptions \ref{assump:eventual_comm.} $-$ \ref{assump:iter_no_more_than_once} by
\begin{align}
{y}(k+1) &= W_{\sigma_k}{y}(k),\label{eqn:switched system}
\end{align}
where $W_{\sigma_k}\in\{W_j\}_{j=1}^{\eta}$ denotes a modal matrix with a switching mode $\sigma_{k}$ at time $k$
and the total number of switching modes $\eta$.

The modal matrices $W_j$, $j=1,2,\hdots,\eta$, are obtained by taking into account every possible scenarios for asynchronous communications.
Based on the fact that asynchrony only takes place while communicating with neighboring agents, the diagonal elements $\{a_{ii}\}$ in $A$ are always stationary in all $W_j$, whereas off-diagonal elements move accordingly in $W_j$. The random characteristic of asynchrony is delineated by a switching rule that controls the switching process $\{\sigma_k\}_{k=0}^{\infty}$. The switched system with a stochastic switching process is particularly referred to as the stochastic switched system or stochastic jump linear system \cite{lee2015performance}. 
The most general form of the modal matrix $W_{j}$ has the following structure:

\vspace{-0.3in}
\small
\begin{align}
W_{j} \in\mathbb{R}^{nq\times nq}&=\begin{bmatrix}
W_{11}(k) & W_{12}(k) & W_{13}(k) & \cdots & W_{1q}(k)\\
\text{{I}}^{n\times n} & \text{{0}}^{n\times n}& \text{{0}}^{n\times n} & \cdots &\text{{0}}^{n\times n}\\ 
\text{{0}}^{n\times n} & \text{{I}}^{n\times n}& \text{{0}}^{n\times n} & \cdots &\text{{0}}^{n\times n}  \\
\vdots & \text{{0}}^{n\times n} &\ddots & \ddots& \vdots\\
\text{{0}}^{n\times n} & \text{{0}}^{n\times n} & \ddots & \text{{I}}^{n\times n} & \text{{0}}^{n\times n}
\end{bmatrix},&\hfill\label{eqn:W_j general}
\end{align}
\normalsize
where $W_{1l}(k)\in\mathbb{R}^{n\times n}$, $l=1,2,\ldots, q$ in $W_j$ is a block matrix such that $\sum_{l=1}^{q}W_{1l}(k) = A$ and the diagonal elements in $W_{11}(k)$ is the same with that in $A$ for all $\forall k$.
Notice that all $W_{j}$ matrices are row stochastic but not column stochastic as each row sum is unity with all nonnegative elements.

As is well known (\cite{lee2015stability}, \cite{lee2015performance}), the switched system framework can be used to formulate the inherent dynamics of the asynchronous update model. 
The following assumptions are given to facilitate the average error analysis under asynchronous updates.

\begin{assumption}\label{assump:i.i.d.}
In \eqref{eqn:each state async}, $k_j^{*}\in\mathbb{K}_q$ is a random variable governed by an i.i.d. (independent and identically distributed) probability $\pi=[\pi_1,\ldots, \pi_q]$, where $\pi_j$ denotes a discrete probability for $j-1$ step delays from neighboring agents.  
\end{assumption}
\begin{assumption}\label{assump:known_F_at_leat_one}
At least, one of the nodes has the information about the diagonal elements $a_{ii}$, $i=1,2,\hdots,n$, in $A$.
\end{assumption}

Assumption \ref{assump:i.i.d.} is given to concertize the characteristics of asynchrony and the i.i.d. probability can be statistically obtained through the collection of data while carrying out iterations. Also, the i.i.d. probability is assumed to be identical across all agents.
Assumption \ref{assump:known_F_at_leat_one} guarantees that at least one node has the information about the diagonal elements in the matrix $A$ and hence, this node can perform the average error analysis. In what follows, we develop the expected average error analysis based on above assumptions.

\begin{lemma}[K. Lee \cite{lee2019effect}]\label{lemma:switching_prob.}
Under Assumption \ref{assump:i.i.d.}, a switching probability governing the switching process $\{\sigma_k\}$ for \eqref{eqn:switched system} is given by $\nu =  [\nu_1,\ldots,\nu_{\eta}] = \pi^{\otimes n(n-1)}$, which is $n(n-1)$ times of the Kronecker product of $\pi\in\mathbb{R}^{1\times q}$, where each $\nu_i$ represents the modal probability associated with the modal matrix $W_j$ and $\eta=q^{n(n-1)}$ is the total number of the switching modes.
\end{lemma}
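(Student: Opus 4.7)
The plan is to trace the randomness in the switched system \eqref{eqn:switched system} back to the elementary delay variables $k_j^{*}$ in \eqref{eqn:each state async}, and then exploit the i.i.d. structure from Assumption \ref{assump:i.i.d.} together with the combinatorial structure of $W_j$ in \eqref{eqn:W_j general}. First I would observe that the mode at time $k$ is completely determined by, for each \emph{ordered} pair $(i,j)$ with $i\neq j$, the choice of which block column $W_{1l}(k)$ receives the entry $a_{ij}$; equivalently by the integer delay $k-k_j^{*}\in\{0,1,\ldots,q-1\}$ associated with node $j$'s contribution to node $i$'s update in \eqref{eqn:each state async}. Since there are exactly $n(n-1)$ such ordered pairs and each pair carries an independent choice among $q$ possible delays, the total number of realizable mode configurations is $q^{n(n-1)}$, which establishes $\eta=q^{n(n-1)}$.

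Next I would compute the probability of an arbitrary mode. Label the $n(n-1)$ ordered pairs by some fixed enumeration $p=1,\ldots,n(n-1)$, and let $d_p\in\{1,\ldots,q\}$ denote the delay index (shifted by one) chosen for pair $p$. By Assumption \ref{assump:i.i.d.}, each $d_p$ has marginal distribution $\pi=[\pi_1,\ldots,\pi_q]$, and the $d_p$ are mutually independent across pairs. Hence the joint probability of the tuple $(d_1,\ldots,d_{n(n-1)})$ factorizes as
\begin{equation*}
\Pr\!\bigl[(d_1,\ldots,d_{n(n-1)})\bigr]=\prod_{p=1}^{n(n-1)}\pi_{d_p}.
\end{equation*}
By the definition of the Kronecker product, this product is exactly the entry of the row vector $\pi^{\otimes n(n-1)}\in\mathbb{R}^{1\times q^{n(n-1)}}$ indexed by the mixed-radix representation of $(d_1,\ldots,d_{n(n-1)})$. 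Choosing the same enumeration to label the modes $W_j$ then gives $\nu_j=[\pi^{\otimes n(n-1)}]_j$, so $\nu=\pi^{\otimes n(n-1)}$.

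The last step is to justify that the joint mode distribution is time-invariant, i.e.\ that the switching process $\{\sigma_k\}$ is itself i.i.d. with law $\nu$. This follows because at every $k$ the modal selection depends only on the current collection $\{d_p(k)\}$, which by Assumption \ref{assump:i.i.d.} is drawn from the same product law $\pi^{\otimes n(n-1)}$ independently of the past, and because Assumption \ref{assump:iter_no_more_than_once} rules out nested updates that could couple successive selections.

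The main obstacle I anticipate is purely bookkeeping: one must argue that the counting $n(n-1)$ is correct even for sparse graphs (a pair $(i,j)$ with $j\notin\mathcal{N}_i$ simply has $a_{ij}=0$, so its ``delay choice'' is degenerate but still appears as an index in the most general form \eqref{eqn:W_j general}), and one must choose a consistent indexing so that the ordering of Kronecker factors in $\pi^{\otimes n(n-1)}$ matches the ordering of the enumerated modes. Once these conventions are fixed, the equality $\nu=\pi^{\otimes n(n-1)}$ is a direct consequence of independence and the definition of the Kronecker product.
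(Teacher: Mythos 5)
Your argument is correct and is essentially the standard one for this result: the mode of \eqref{eqn:switched system} is determined by $n(n-1)$ mutually independent delay choices (one per ordered pair of distinct nodes), each with marginal law $\pi$, so the joint law factorizes into the Kronecker product $\pi^{\otimes n(n-1)}$ over $\eta=q^{n(n-1)}$ modes. The paper itself states this lemma without proof, quoting it from the cited reference, and your reasoning — including the remark that pairs with $a_{ij}=0$ contribute degenerate but still-indexed choices in the general form \eqref{eqn:W_j general} — matches the intended derivation.
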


As both $\pi$ and $\nu$ are discrete probability distributions, it follows $\sum_{j=1}^{q}\pi_j=1$ and $\sum_{i=1}^{\eta}\nu_i = 1$.

\begin{proposition}[K. Lee \cite{lee2019effect}]\label{prop:Wtilde_reduced_model}
Consider the switched system \eqref{eqn:switched system} with a switching probability $\nu$ as in Lemma \ref{lemma:switching_prob.}.
For a row vector $s_j\in\mathbb{R}^{1\times q}$, defined by all elements with zero except the $j^{th}$ element being one, i.e., $s_j := [0, \ldots, 0, 1, 0 ,\ldots , 0]$, and the following matrices

\vspace{-0.3in}
\small
\begin{align}
W^{\text{diag}} &:= \begin{bmatrix}
A^{\text{diag}} & \text{{0}}^{~n\times (n-1)q} \\
\\
\text{{I}}^{~(n-1)q\times (n-1)q} & \text{{0}}^{~(n-1)q\times n}
\end{bmatrix}, \\
W_j^{\text{off}} &:= 
s_j\otimes \begin{bmatrix}
A^{off}\\
\text{{0}}^{~(n-1)q\times n}
\end{bmatrix},\label{eqn:W_j^off}
\end{align}
\normalsize
the expectation of $W_{\sigma_k}$, $\overline{W}:=\mathbb{E}[W_{\sigma_k}]$, at any time $k$ is calculated by 
\begin{align}
\overline{W} = \sum_{i=1}^{\eta=q^{n(n-1)}}\nu_i W_i = W^{diag} + \sum_{j=1}^{q}\pi_j W_j^{off},\label{eqn: Wmean equivalent form}
\end{align}
where $A^{\text{diag}}$ and $A^{off}$, respectively, are the matrices composed of diagonal and off-diagonal elements in $A$.
\end{proposition}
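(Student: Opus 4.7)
The plan is to decompose each modal matrix $W_i$ into a deterministic skeleton plus a random ``top block row'' that carries the off-diagonal entries of $A$, and then evaluate the expectation using the Kronecker structure of $\nu$ supplied by Lemma \ref{lemma:switching_prob.}.

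First, from the general form \eqref{eqn:W_j general}, every modal matrix $W_i$ shares the same identity-shift subdiagonal; within the top block row, the diagonal of $W_{11}$ is pinned to $A^{\text{diag}}$ for all $k$, while the off-diagonal mass of $A$ is merely redistributed across the $q$ blocks $W_{11},\ldots,W_{1q}$. Therefore one can write $W_i = W^{\text{diag}} + R_i$, where $W^{\text{diag}}$ is common to every mode and $R_i$ is supported on the top $n$ rows, containing only off-diagonal entries of $A$. Since $\sum_{i=1}^{\eta}\nu_i = 1$, it follows that $\overline{W} = W^{\text{diag}} + \sum_i \nu_i R_i$, so the problem reduces to computing the expectation of $R_{\sigma_k}$.

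Next, fix any off-diagonal pair $(\alpha,\beta)$ with $\alpha \neq \beta$. The entry $a_{\alpha\beta}$ is placed in the $(\alpha,\beta)$ slot of block $W_{1l}$ precisely when the delay $k_\beta^{*}$ takes the $l$-th value in $\mathbb{K}_q$; by Assumption \ref{assump:i.i.d.} this occurs with probability $\pi_l$. Lemma \ref{lemma:switching_prob.} states $\nu = \pi^{\otimes n(n-1)}$, so the placements of the $n(n-1)$ off-diagonal entries are mutually independent, and linearity of expectation collapses the mode-sum to $\sum_i \nu_i R_i = \sum_{l=1}^{q}\pi_l M_l$, where $M_l$ denotes the matrix whose only nonzero block is $A^{\text{off}}$ sitting in the $l$-th top block column. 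A direct expansion of $s_l \otimes \bigl[\,(A^{\text{off}})^{T}\ \text{0}^{T}\,\bigr]^{T}$ places $A^{\text{off}}$ in exactly that position, so $M_l$ coincides with $W_l^{\text{off}}$ in \eqref{eqn:W_j^off}, and assembling the two pieces yields \eqref{eqn: Wmean equivalent form}.

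The main obstacle is the bookkeeping underlying the second paragraph: one must justify that the Kronecker factorization $\nu = \pi^{\otimes n(n-1)}$ really encodes independent and identically distributed placement of each of the $n(n-1)$ off-diagonal entries across the $q$ block-column positions, so that the full sum over $\eta = q^{n(n-1)}$ modes collapses into the per-entry average $\sum_l \pi_l M_l$ without cross terms. Once this correspondence is made explicit, the remaining manipulations are routine applications of linearity of expectation together with the identification of $M_l$ with the Kronecker form in \eqref{eqn:W_j^off}.
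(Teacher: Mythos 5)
Your argument is correct. Note first that the paper does not prove this proposition itself --- it is imported from \cite{lee2019effect} --- so there is no in-paper proof to compare against; your write-up supplies the missing argument, and it is the natural one: split off the mode-independent skeleton $W^{\text{diag}}$ (the pinned diagonal block $A^{\text{diag}}$ plus the shift identities), use $\sum_i \nu_i = 1$ to reduce to the expectation of the random top block row, and evaluate that expectation entrywise. One substantive remark: the ``main obstacle'' you identify in the last paragraph is not actually an obstacle. Since $\overline{W}$ is a first moment of a quantity that is \emph{linear} in the placement indicators $\mathbf{1}\{k_\beta^{*} = k-l+1\}$, linearity of expectation needs only the marginal law $\pi$ of each placement; mutual independence of the $n(n-1)$ placements (the Kronecker factorization $\nu = \pi^{\otimes n(n-1)}$) is never used beyond its marginals, so there are no cross terms to worry about in the first place. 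This is consistent with the paper's later remark that independence, not the full i.i.d.\ structure, is what matters. A cosmetic point: the block dimensions $(n-1)q$ in the stated $W^{\text{diag}}$ and $W_j^{\text{off}}$ appear to be a typo for $n(q-1)$, which is what the companion structure of \eqref{eqn:W_j general} requires and what your decomposition implicitly uses.
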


\begin{remark}[Scalability issue]\label{remark: scalability}
In Lemma \ref{lemma:switching_prob.} and the first equality of \eqref{eqn: Wmean equivalent form}, constructing $\overline{W}$ requires the information about all $W_j$, $j=1,2,\ldots,\eta$. This computation causes the notorious \textbf{scalability issue} as indicated in \cite{lee2015stability}, \cite{lee2015performance} due to the scale of $\eta=q^{n(n-1)}$. 
Proposition \ref{prop:Wtilde_reduced_model} assures that the computational complexity of $\overline{W}$ drastically reduces 
as the size of summation terms in the second equality of \eqref{eqn: Wmean equivalent form} only grows \textit{linearly} with respect to $q$.
As a result, the scalability issue can be avoided by Proposition \ref{prop:Wtilde_reduced_model}.
\end{remark}

\begin{theorem}\label{thm: 1}
Consider the asynchronous distributed averaging problem under Assumptions $1-4$. 
For the exact and asynchronous average denoted by $\bar{x}$ and $x^{\star}$, respectively, the expected average error, $\lvert \mathbb{E}\left[\bar{x} - x^{\star}\right]\rvert$, is upper bounded by
\begin{align}
\lvert \mathbb{E}\left[\bar{x} - x^{\star}\right]\rvert \leq 
\dfrac{c\sqrt{n}}{d} \lVert \text{diag}(A - \overline{a}_{ii}I) \rVert \cdot \lVert x(0)\rVert_{\infty},\label{eqn:thm 1}
\end{align}
where $c:=\sum_{j=2}^{q}(j-1)\pi_j$, $d:=n\left((1+c)-c\overline{a}_{ii}\right)$, and $\overline{a}_{ii}:=\dfrac{1}{n}\sum_{i=1}^{n}a_{ii}$.
\end{theorem}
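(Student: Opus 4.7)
The plan is to reduce the expected average error to the computation of a dominant left eigenvector of the mean operator $\overline{W}$ and then bound a resulting linear functional of $x(0)$.

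First, I would push expectations through the switched dynamics. Under Assumption~\ref{assump:i.i.d.} the mode $\sigma_k$ is i.i.d.\ and independent of $y(k)$, so $\mathbb{E}[y(k+1)] = \overline{W}\,\mathbb{E}[y(k)]$ and iterating gives $\mathbb{E}[y(k)] = \overline{W}^{k}y(0)$. With the natural history initialization $y(0) = \underline{1}_{q}\otimes x(0)$, and with $\overline{W}$ having a simple eigenvalue at $1$ with all other eigenvalues strictly inside the unit disk (which I expect to follow from double stochasticity of $A$ together with Assumption~\ref{assump:eventual_comm.}), one gets $\overline{W}^{k} \to \underline{1}_{nq}v^{T}$, where $v$ is the unique left eigenvector normalized by $v^{T}\underline{1}_{nq} = 1$. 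Consequently $\mathbb{E}[x^{\star}] = v^{T}y(0) = \bigl(\sum_{l=1}^{q}v_{l}^{T}\bigr)x(0)$ when $v^{T} = [v_{1}^{T},\ldots,v_{q}^{T}]$.

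Second, I would compute $v$ in closed form using the block structure of $\overline{W} = W^{\text{diag}}+\sum_{j=1}^{q}\pi_{j}W_{j}^{\text{off}}$ supplied by Proposition~\ref{prop:Wtilde_reduced_model}. Reading off the block-row equations of $v^{T}\overline{W} = v^{T}$ yields
\begin{align*}
v_{1}^{T}(A^{\text{diag}}+\pi_{1}A^{\text{off}})+v_{2}^{T} &= v_{1}^{T},\\
v_{l+1}^{T} &= v_{l}^{T}-v_{1}^{T}\pi_{l}A^{\text{off}},\quad 2\le l\le q-1,\\
v_{q}^{T} &= v_{1}^{T}\pi_{q}A^{\text{off}}.
\end{align*}
Telescoping these equations, together with $A^{\text{diag}}+A^{\text{off}} = A$ and $\sum_{j=1}^{q}\pi_{j} = 1$, collapses the system to $v_{1}^{T} = v_{1}^{T}A$. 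Since $A$ is doubly stochastic (and the graph is connected), $v_{1}^{T} = \gamma\underline{1}^{T}$ for some scalar $\gamma$, and back-substitution gives $v_{l}^{T} = \gamma\bigl(\sum_{j=l}^{q}\pi_{j}\bigr)\underline{1}^{T}(I-A^{\text{diag}})$ for $l \ge 2$. Using the combinatorial identity $\sum_{l=2}^{q}\sum_{j=l}^{q}\pi_{j} = \sum_{j=2}^{q}(j-1)\pi_{j} = c$, the normalization $v^{T}\underline{1}_{nq} = 1$ pins down $\gamma = 1/d$.

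Third, substituting into $\mathbb{E}[\bar{x}-x^{\star}] = \bigl(\tfrac{1}{n}\underline{1}^{T}-\sum_{l=1}^{q}v_{l}^{T}\bigr)x(0)$ and simplifying (the $\underline{1}^{T}$ piece contributes $-c\overline{a}_{ii}/d$ precisely because of the way $d$ is defined) produces
\begin{equation*}
\mathbb{E}[\bar{x}-x^{\star}] = \tfrac{c}{d}\,\text{diag}(A-\overline{a}_{ii}I)^{T}\,x(0),
\end{equation*}
and Cauchy--Schwarz followed by $\|x(0)\| \le \sqrt{n}\,\|x(0)\|_{\infty}$ delivers \eqref{eqn:thm 1}. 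The main obstacle will be the eigenvector calculation, in particular recognizing the telescoping cancellation that reduces the coupled block system to $v_{1}^{T} = v_{1}^{T}A$; from that point the remaining algebra is routine bookkeeping. A smaller technicality is justifying the interchange of limit and expectation that identifies $\mathbb{E}[x^{\star}]$ with $v^{T}y(0)$, but this should be standard given that $\overline{W}^{k}y(0)$ is bounded and convergent.
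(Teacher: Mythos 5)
Your proposal is correct and follows essentially the same route as the paper's proof: compute the left eigenvector of $\overline{W}$ blockwise, collapse the block equations to $v_1^{T}A=v_1^{T}$, invoke double stochasticity, sum and normalize the blocks to obtain the $\tfrac{c}{d}\,\text{diag}(A-\overline{a}_{ii}I)$ expression, and finish with Cauchy--Schwarz and the $\sqrt{n}\lVert x(0)\rVert_{\infty}$ bound. The only (welcome) difference is that you spend a little more care justifying $\mathbb{E}[x^{\star}]=v^{T}y(0)$ via convergence of $\overline{W}^{k}$, a point the paper asserts implicitly through the stationary form $\overline{W}^{\star}=\underline{1}\otimes w^{\star}$.
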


\begin{proof}
Let $w^{\star}:=[w_1^{\star},w_2^{\star},w_3^{\star},\ldots,w_q^{\star}]\in\mathbb{R}^{1\times nq}$ be the \textit{left} eigenvector of $\overline{W}$ with an eigenvalue one, i.e., $w^{\star}\overline{W}=w^{\star}$, where $w_j^{\star}\in\mathbb{R}^{1\times n}$. It is worth noting that the stationary form of $\overline{W}$ is given by $\overline{W}^{\star}= \underline{1}\otimes w^{\star}$. Then, from the intrinsic structure of $\overline{W}$ similar to \eqref{eqn:W_j general}, the computation of $w^{\star}\overline{W}=w^{\star}$ yields
\begin{equation}
\begin{aligned}
&w_1^{\star}\overline{W}_{11} + w_2^{\star} = w_1^{\star}\\
&w_1^{\star}\overline{W}_{12} + w_3^{\star} = w_2^{\star}\\
&\qquad\qquad\vdots\\
&w_1^{\star}\overline{W}_{1(q-1)} + w_{q}^{\star} = w_{q-1}^{\star}\\
&w_1^{\star}\overline{W}_{1q} = w_{q}^{\star}
\end{aligned}\label{eqn: w_jWbar=w_j}
\end{equation}
resulting in
\vspace{-0.3in}
\small
\begin{align*}
w^{\star}&=[w_1^{\star},w_1^{\star}\left(I-\overline{W}_{11}\right),w_1^{\star}\left(I-\overline{W}_{11}-\overline{W}_{12}\right),\ldots,\\
&\qquad\qquad w_1^{\star}\left(I-\overline{W}_{11}-\overline{W}_{12}-\cdots -\overline{W}_{1(q-1)}\right)].
\end{align*}
\normalsize
Given a definition $(w^{\star})':=\sum_{j=1}^{q}w_j^{\star}$, it follows
\vspace{-0.3in}
\small
\begin{align*}
(w^{\star})' &= qw_1^{\star} - (q-1)w_1^{\star}\overline{W}_{11} - \cdots - w_1^{\star}\overline{W}_{1(q-1)}\\
&= qw_1^{\star} - \sum_{j=1}^{q-1}(q-j)w_1^{\star}\overline{W}_{1j}\\
&= w_1^{\star}(qI - \sum_{j=1}^{q-1}(q-j)\overline{W}_{1j}).
\end{align*}
\normalsize
From the result in Proposition \ref{prop:Wtilde_reduced_model}, $\overline{W}_{11} = A^{diag} + \pi_1A^{off}$ and $\overline{W}_{1j} = \pi_jA^{off}$ for $j=2,3,\ldots, q$. Then, the above equation is equivalently written by
\vspace{-0.3in}
\small
\begin{align*}
&(w^{\star})' = w_1^{\star}\Big(qI - (q-1)A^{diag}-\sum_{j=1}^{q-1}(q-j)\pi_jA^{off}\Big)\\
&= w_1^{\star}\Big(qI - (q-1)A^{diag}\\
&\qquad\qquad\qquad\qquad-((q-1)\pi_1+\sum_{j=2}^{q-1}(q-j)\pi_j)A^{off}\Big)\\
&= w_1^{\star}\Big(qI - (q-1)A^{diag}-((q-1)-\sum_{j=2}^{q}(j-1)\pi_j)A^{off}\Big)\\
&= w_1^{\star}\Big(qI - (q-1)A + \sum_{j=2}^{q}(j-1)\pi_jA^{off} \Big),
\end{align*}
\normalsize
where the third equality holds by $\pi_1 = 1-\sum_{j=2}^{q}\pi_j$ and the following information $A=A^{diag}+A^{off}$ is used in the last equality.

Adding all equations in \eqref{eqn: w_jWbar=w_j} with the fact that $\sum_{j=1}^{q}\overline{W}_{1j}$ $= A$ yields
$w_1^{\star}A = w_1^{\star}$.
Since the matrix $A$ is doubly-stochastic, we have
$w_1^{\star}=\underline{1}^{T}$, leading to
\vspace*{-0.3in}
\small
\begin{align*}
(w^{\star})' &= \left(q - (q-1)\right)\underline{1}^{T} + \sum_{j=2}^{q}(j-1)\pi_j\underline{1}^{T}A^{off}\\
&= [1+c(1-a_{11}),1+c(1-a_{22}), \ldots, 1+c(1-a_{nn})],
\end{align*}
\normalsize
where $c$ is given in \eqref{eqn:thm 1}.

The normalized form of $(w^{\star})'$ is then obtained by
\begin{align*}
(w^{\star})'_{\text{normal}} = \dfrac{1}{n(1+c)-c\sum_{i=1}^{n}a_{ii}}(w^{\star})'.
\end{align*}

Finally, from the fact that $\bar{x}=\dfrac{1}{n}\underline{1}^{T} x(0)$ and $\mathbb{E}[x^{\star}]=(w^{\star})'x(0)$, the expected average error is calculated by
\begin{align*}
&\lvert \mathbb{E}\left[\bar{x} - x^{\star}\right]\rvert \\
&= \left\lvert \left(\dfrac{1}{n}\underline{1}^{T} - \dfrac{1}{n(1+c)-c\sum_{i=1}^{n}a_{ii}}(w^{\star})'\right)x(0)\right\rvert\\
&\leq \left\lVert \dfrac{1}{n}\underline{1}^{T} - \dfrac{1}{n(1+c)-c\sum_{i=1}^{n}a_{ii}}(w^{\star})' \right\rVert \cdot \lVert x(0)\rVert\\
&= \left\lVert \dfrac{c}{d}[a_{11} - \overline{a}_{ii}, \ldots, a_{nn}-\overline{a}_{ii}] \right\rVert \cdot \lVert x(0)\rVert\\
&\leq\dfrac{c}{d} \lVert \text{diag}(A - \overline{a}_{ii}I) \rVert \cdot \sqrt{n}\lVert x(0)\rVert_{\infty},
\end{align*}
where $d$ and $\overline{a}_{ii}$ are defined in \eqref{eqn:thm 1} and the last inequality holds by the vector norm property.
\end{proof}

\begin{remark}[Privacy issue]
In Theorem \ref{thm: 1}, the upper bound for the expected average error is obtained using $\sqrt{n}\lVert x(0)\rVert_{\infty}$, instead of $\lVert x(0) \rVert$. Although this leads to conservatism, computation of $\lVert x(0) \rVert$ requires the information about all initial values of each node (opinions), causing an infringement of privacy issue in some applications (e.g., social networks). On the other hand, the infinity norm only requires the largest absolute value in $x(0)$, which is possibly informed to the node knowing $\{a_{ii}\}$ (under Assumption 4) through message passing. In this way, the privacy intrusion issue can be averted as it is not required to know opinions of all nodes.
\end{remark}

\begin{corollary}\label{cor: 1}
For the asynchronous distributed averaging problem \eqref{eqn:each state async} under Assumptions $1-3$, there is no average error irrespective of asynchronism in the expectation sense, i.e., $\lvert \mathbb{E}\left[\bar{x} - x^{\star}\right]\rvert=0$, if $a_{ii}$ in $A$ is identical across all nodes $i=1,2,\ldots,n$.
\end{corollary}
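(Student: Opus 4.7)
The plan is to read the corollary as a direct specialization of Theorem \ref{thm: 1} to the case of a constant diagonal, so that the upper bound in \eqref{eqn:thm 1} collapses to zero. First I would observe that if $a_{11} = a_{22} = \cdots = a_{nn} = a$, then the arithmetic mean $\overline{a}_{ii} = \frac{1}{n}\sum_{i=1}^{n} a_{ii}$ equals $a$, so each component of the vector $\mathrm{diag}(A - \overline{a}_{ii} I) = [a_{11}-\overline{a}_{ii}, \ldots, a_{nn}-\overline{a}_{ii}]^{T}$ vanishes. Consequently $\lVert \mathrm{diag}(A - \overline{a}_{ii} I)\rVert = 0$, and \eqref{eqn:thm 1} forces $\lvert \mathbb{E}[\bar{x} - x^{\star}]\rvert \leq 0$; since the left-hand side is a nonnegative absolute value, equality with zero follows.

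Next I would justify why Assumption \ref{assump:known_F_at_leat_one} can be dropped in the corollary even though it appears in the hypotheses of Theorem \ref{thm: 1}. That assumption was only needed so that some node could actually compute the bound; the bound itself, and therefore its vanishing under the present symmetry hypothesis, is a mathematical statement about the dynamics \eqref{eqn:switched system} under Assumptions \ref{assump:eventual_comm.}--\ref{assump:i.i.d.} alone. Hence Assumptions $1$--$3$ suffice for the conclusion.

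If a self-contained argument were preferred, I would rerun the last few lines of the proof of Theorem \ref{thm: 1}. There it was shown that $(w^{\star})' = [1 + c(1 - a_{11}), \ldots, 1 + c(1 - a_{nn})]$ with normalizer $n(1+c) - c\sum_{i=1}^{n} a_{ii}$. Under $a_{ii} \equiv a$ the row vector collapses to $(1 + c(1-a))\underline{1}^{T}$ and the normalizer becomes $n(1 + c(1-a))$, so the normalized stationary left eigenvector equals $\tfrac{1}{n}\underline{1}^{T}$. Since $\mathbb{E}[x^{\star}] = (w^{\star})'_{\text{normal}} x(0)$, we obtain $\mathbb{E}[x^{\star}] = \tfrac{1}{n}\underline{1}^{T} x(0) = \bar{x}$, which is exactly the desired identity.

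There is no serious obstacle here; the only subtlety is to notice that the prefactor $\tfrac{c\sqrt{n}}{d}$ and the data-dependent term $\lVert x(0)\rVert_{\infty}$ in \eqref{eqn:thm 1} are immaterial once the diagonal-mismatch norm annihilates the bound, and to state explicitly that the conclusion is independent of whether any node is aware of $\{a_{ii}\}$, which is why Assumption \ref{assump:known_F_at_leat_one} is absent from the hypotheses.
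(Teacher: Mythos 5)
Your proposal is correct and its main argument is exactly the paper's: with identical diagonal entries, $\overline{a}_{ii}=a_{ii}$ for all $i$, so $\mathrm{diag}(A-\overline{a}_{ii}I)=\underline{0}$ and the bound in \eqref{eqn:thm 1} forces the nonnegative expected error to vanish. The additional self-contained computation showing $(w^{\star})'_{\text{normal}}=\tfrac{1}{n}\underline{1}^{T}$, and the remark on why Assumption \ref{assump:known_F_at_leat_one} is dispensable, are sound but go beyond what the paper's one-line proof requires.
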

\begin{proof}
In the case that $a_{ii} = a_{jj}$ $\forall i,j$, we have $a_{ii} = \overline{a}_{ii}$, $\forall i$, yielding $\text{diag}(A - \overline{a}_{ii}I) = \underline{0}$ in \eqref{eqn:thm 1} and hence, $\lvert \mathbb{E}\left[\bar{x} - x^{\star}\right]\rvert=0$.
\end{proof}

\begin{remark}[Independent probability]
The result in Corollary \ref{cor: 1} holds even if $\pi$ is not i.i.d. It is enough for this probability being independent for each event.
\end{remark}

\section{Numerical Example}
\begin{figure*}[t]
\centering
\subfloat[]{\includegraphics[scale=0.38]{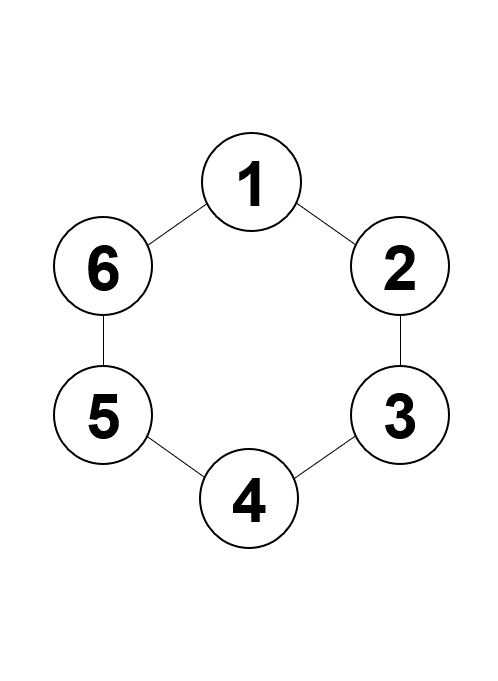}}\qquad
\subfloat[]{\includegraphics[scale=0.2]{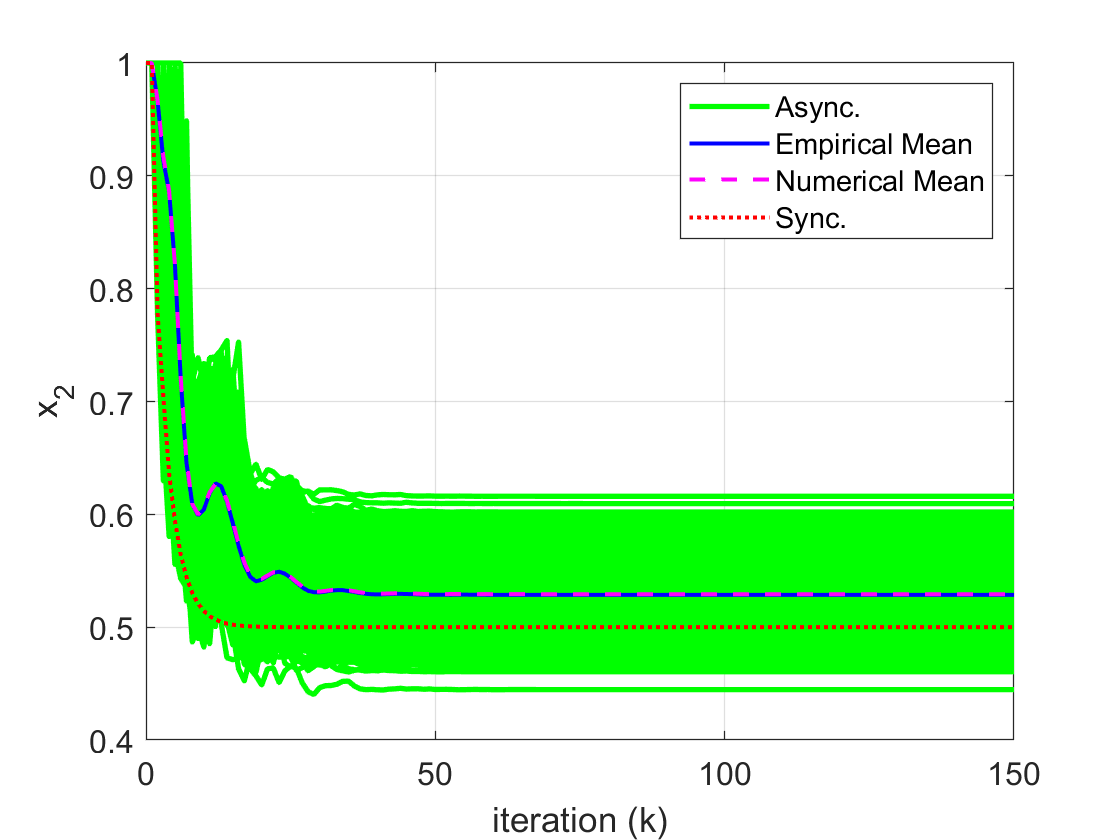}}\,
\subfloat[]{\includegraphics[scale=0.2]{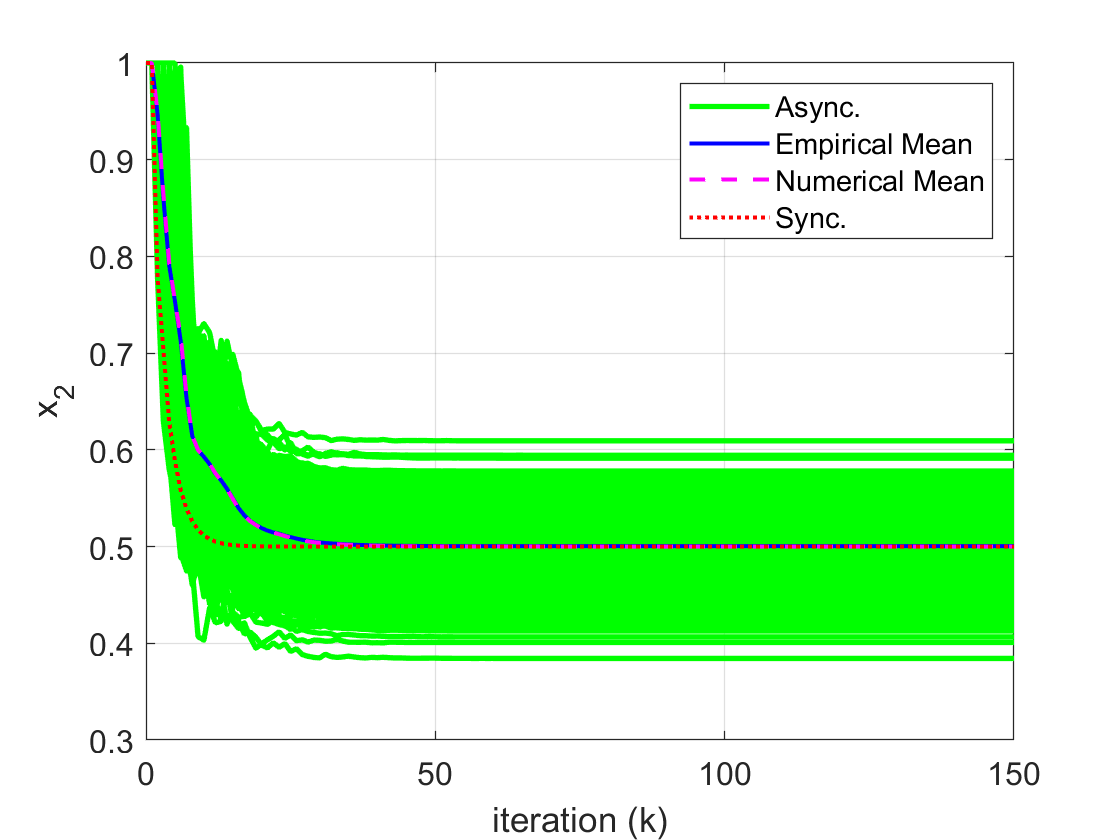}}
\caption{Simulation results for the asynchronous distributed averaging: (a) network topology for the six-node system; time evolution of $x_2$ values under asynchronous updates for (b) non-identical diagonal elements and (c) identical diagonal elements cases}\label{fig: sim}
\end{figure*}
In this section, simulation results are provided to test the technical soundness of the proposed results. 
The network topology for the simulation of the distributed averaging is depicted in Fig. \ref{fig: sim} (a) that is a six-node system only connected with adjacent nodes. Two different cases are considered for (b) non-identical  and (c) identical diagonal elements $\{a_{ii}\}$ in $A$. The associated matrix $A$ for each case is given below:

\vspace{-0.3in}
\small
\begin{align*}
&A_{(b)}= & &A_{(c)}=\\
&\begin{bmatrix}
\sfrac{1}{3} & \sfrac{1}{3} & 0 & 0 & 0 & \sfrac{1}{3}\\
\sfrac{1}{3} & \sfrac{1}{3} & \sfrac{1}{3} & 0 & 0 & 0\\
0 & \sfrac{1}{3} & \sfrac{1}{3} & \sfrac{1}{3} & 0 & 0\\
0 & 0 & \sfrac{1}{3} & \sfrac{5}{12} & \sfrac{1}{4} & 0\\
0 & 0 & 0 & \sfrac{1}{4} & \sfrac{1}{2} & \sfrac{1}{4}\\
\sfrac{1}{3} & 0 & 0 & 0 & \sfrac{1}{4} & \sfrac{5}{12}
\end{bmatrix},&\,
&\begin{bmatrix}
\sfrac{1}{3} & \sfrac{1}{3} & 0 & 0 & 0 & \sfrac{1}{3}\\
\sfrac{1}{3} & \sfrac{1}{3} & \sfrac{1}{3} & 0 & 0 & 0\\
0 & \sfrac{1}{3} & \sfrac{1}{3} & \sfrac{1}{3} & 0 & 0\\
0 & 0 & \sfrac{1}{3} & \sfrac{1}{3} & \sfrac{1}{3} & 0\\
0 & 0 & 0 & \sfrac{1}{3} & \sfrac{1}{3} & \sfrac{1}{3}\\
\sfrac{1}{3} & 0 & 0 & 0 & \sfrac{1}{3} & \sfrac{1}{3}
\end{bmatrix}.
\end{align*}
\normalsize
The initial node values are set up as $x(0) = [1,1,1,0,0,0]^{T}$ and hence, the exact average is known to be 0.5.

A total of $1,000$ simulations were carried out for each case, where the ensembles for the node 2 from Monte Carlo simulations are presented in Fig. \ref{fig: sim} (b), (c) by green solid lines. The i.i.d. probability $\pi$ is randomly generated initially to reflect the property of asynchronous updates in distributed averaging and is fixed throughout all simulations. Due to the asynchronism, the node value converged to different averages for each run, whereas the synchronous distributed averaging ended up with the exact average -- $0.5$. The empirical and numerical means, respectively, are denoted by blue solid and magenta dashed lines. Notice that the numerical mean is analytically obtained by the result in Proposition \ref{prop:Wtilde_reduced_model}, which does not require Monte Carlo simulations. 

In Fig. \ref{fig: sim} (b), corresponding to the non-identical $\{a_{ii}\}$ case, the actual expected average error, $\lvert \mathbb{E}\left[\bar{x} - x^{\star}\right]\rvert$, is 0.0323 while its upper bound is calculated from Theorem \ref{thm: 1} by 0.0723. This is sufficiently valuable information to estimate the range of the exact average when updated asynchronously. 
Also, it is observed in Fig. \ref{fig: sim} (c) that the expected average error is zero in case of identical diagonal elements, meaning the expected asynchronous average is the same as the exact average. This proves the correctness of Corollary \ref{cor: 1}.

\section{Conclusion}
In this study, the expected average error for asynchronous distributed averaging problems was investigated. The upper bound for this error was derived based on the switched system framework. As a special case, it is shown that the expected average error is zero when the diagonal elements in the interaction topology matrix $A$ are identical across all nodes. 
The proposed method does not require a node to know the entire information about the interaction topology for the upper bound analysis. Further, it is unnecessary to detect opinions of each node, and hence privacy-preserving. 
The validation is performed by numerical examples.
%

\bibliographystyle{ieeetr}        
\bibliography{reference}           



\end{document}